\newtheorem{theorem}{Theorem}
\newtheorem{lemma}{Lemma}
\newtheorem{definition}{Definition}
\begin{document}

%

\title{Integer-Forcing Linear Receivers: A Design Criterion for Full-Diversity STBCs}
\author{J. Harshan$^{\dagger}$, Amin Sakzad$^{\star}$, Emanuele Viterbo$^{\ast}$\\
$^{\dagger}$Advanced Digital Sciences Center, Singapore, \\$^{\star}$Clayton School of Information Technology, Monash University, Australia,\\ $^{\ast}$Department of Electrical and Computer Systems, Monash University, Australia\\
Email: harshan.j@adsc.com.sg, amin.sakzad@monash.edu, emanuele.viterbo@monash.edu\\}

\maketitle

\begin{abstract}
In multiple-input multiple-output (MIMO) fading channels, the design criterion for full-diversity space-time block codes (STBCs) is primarily determined by the decoding method at the receiver. Although constructions of STBCs have predominantly matched the maximum-likelihood (ML) decoder, design criteria and constructions of full-diversity STBCs have also been reported for low-complexity linear receivers. A new receiver architecture called Integer-Forcing (IF) linear receiver has been proposed to MIMO channels by Zhan \emph{et al.} which showed promising results for the high-rate V-BLAST encoding scheme. In this work we address the design of full-diversity STBCs for IF linear receivers. We derive an upper bound on the probability of decoding error, and show that STBCs that satisfy the non-vanishing singular value (NVS) property provide full-diversity for the IF receiver. We also present simulation results to demonstrate that linear designs with NVS property provide full diversity for IF receiver. As a special case of our analysis on STBCs, we present an upper bound on the error probability for the V-BLAST architecture presented by Zhan \emph{et al.}, and demonstrate that the IF linear receivers provide full receive diversity. Our results supplement the existing outage probability based results for the IF receiver. 

\end{abstract}

\begin{keywords}
MIMO, STBCs, integer-forcing, linear receivers, non-vanishing determinant property.
\end{keywords}
\section{Introduction and Preliminaries}
\label{sec1}


Space-time coding is a powerful transmitter-side technique that assists reliable communication over multiple-input multiple-output (MIMO) fading channels. For a MIMO channel with $n_{t}$ transmit and $n_{r}$ receive antennas, a space-time block code (STBC) denoted by $\mathcal{C} \subset \mathbb{C}^{n_{t} \times T}$ is a finite set of complex matrices used to convey $\mbox{log}_{2}(|\mathcal{C}|)$ information bits to the destination \cite{TSC}. To recover the information bits all the observations collected across $n_{r}$ receive antennas and $T$ time slots at the destination are  appropriately processed by a suitable decoder $\mathcal{D}$, e.g., maximum-likelihood (ML) decoder, zero-forcing (ZF) receiver, or a minimum mean square error (MMSE) receiver.
%

\indent Considering the high computational complexity of the ML decoder, many research groups have addressed the design and construction of full-diversity STBCs that are matched to suboptimal \emph{linear receivers} such as the ZF and the MMSE receivers \cite{ZLW1}-\cite{WXYL}. These linear receivers reduce the complexity of the decoding process by trading off some error performance with respect to ML decoder. In \cite{ShX} a new design criterion for full-diversity STBCs matched to ZF receivers is proposed, which imposes a constraint on the symbol-rate of STBCs. In particular, it has been proved that the symbol-rate of such STBCs is upper bounded by one. For some code constructions matched to ZF and MMSE linear receivers, we refer the reader to \cite{ZLW1}, \cite{WXYL}. In summary, a rate loss is associated with the design of full-diversity STBCs compliant to ZF and MMSE receivers \cite{Aria13}. A comparison of the decoding complexity, diversity, and the symbol-rate of STBCs for these decoders is summarized in Table \ref{complexity_table} (given in next page). 
%

\indent A new receiver architecture called integer forcing (IF) linear receiver has been recently proposed~\cite{zhan12} to attain higher rates with reduced decoding complexity. In such a framework, the source employs a layered transmission scheme and transmits independent codewords simultaneously across the layers. This has been referred to as the \emph{V-BLAST encoding scheme} in ~\cite{zhan12}. At the receiver side, each layer is allowed to decode an integer linear combination of transmitted codewords, and then recover the information by solving a system of linear equations.

Although IF receivers are known to work well with the V-BLAST scheme \cite{WeC}-\cite{DoE}, not many works have investigated the suitability of IF receivers to decode STBCs. In \cite{OrE} IF receiver has been applied to decode a layered transmission scheme involving perfect STBCs. Such an architecture has been shown to achieve the capacity of any Gaussian MIMO channel upto a gap that depends on the number of transmit antennas. An interesting question that arises from \cite{OrE} is: What is the design criterion for full-diversity STBCs for IF linear receivers?

In order to answer the above question, it is paramount to characterize the structure of good STBCs for IF receivers. Towards that end, we study the error performance of IF receivers along the lines of \cite{TSC, ShX}, and propose a design criterion for constructing full-diversity STBCs. The contributions of this paper is as follows: We study the application of IF linear receivers to decode STBCs in MIMO channels. We present a decoder error analysis for the IF receiver in order to obtain a design criterion for full-diversity STBCs. We show that STBCs that satisfy a criterion called the non-vanishing singular value (NVS) criterion provide full-diversity for the IF linear receiver. This is a stronger condition than the rank criterion \cite{TSC} for the ML decoder. In \cite{zhan12} an outage probability based error analysis of the IF receiver is presented for the V-BLAST scheme. As a special case of our error analysis on STBCs, we derive an upper bound on the error probability for the architecture presented in \cite{zhan12} and show that the IF linear receivers provide full receive diversity for such a scheme. Thus, our results supplement the existing outage based results for the IF receiver.

The problem of constructing STBCs based on the minimum singular value criterion is not entirely new. In \cite[Ch. 9]{Tse_visw} it has been shown that maximising the minimum singular value of the difference of codeword matrices provides the \emph{approximate universality} property for STBCs in Multiple-Input Single-Output (MISO) channels. It is to be noted that the criterion in \cite{Tse_visw} applies for the ML decoder. However, in this work, we show that the NVS criterion is applicable for the IF receiver for MIMO channels (not only MISO channels).

\begin{table}
\begin{center}
\caption{Comparison of STBCs for various receivers: $n_{t}$ and $n_{r}$ denote the number of transmit and receive antennas, respectively.}
\begin{tabular}{|c|c|c|c|} \hline
Approach & Decoding & Spatial & Symbol-\\
& Complexity & Diversity & Rate\\
\hline
{\textbf ML} & high & $n_{t}n_{r}$ & $\leq \mbox{min}(n_{t}, n_{r})$\\
{\textbf ZF \& MMSE} & low & $n_{t}n_{r}$ & $\leq 1$\\
{\textbf IF} & low & $n_{t}n_{r}$ & $\leq \mbox{min}(n_{t}, n_{r})$\\
\hline
\end{tabular}
\end{center}
\label{complexity_table}
\end{table}


\indent {\em Notations}. Boldface letters are used for vectors, and capital boldface letters for matrices. We let $\mathbb{R}$, $\mathbb{C}$, $\mathbb{Z}$, $\mathbb{Q}$, and $\mathbb{Z}[\imath]$ denote the set of real numbers, complex numbers, integers, rational numbers, and the Gaussian integers, respectively, where $\imath^2 = -1$. We let ${\textbf I}_n$ and ${\textbf 0}_n$ denote the $n\times n$ identity matrix and zero matrix and the operations $(\cdot)^T$ and $(\cdot)^H$ denote transposition and Hermitian transposition. We let $| \cdot |$ and $\| \cdot \|$ denote the absolute value of a complex number and the Euclidean norm of a vector, respectively. The operation $\mathbb{E}(\cdot)$ denotes mean of a random variable. We let $\lfloor x \rceil$ and $\lfloor {\textbf v} \rceil$ denote the closest integer to $x$ and the component-wise equivalent operation. The symbol $\textbf{X}_{j, m}$ denotes the element in the $j$-th row and $m$-th column of $\textbf{X}$. For a matrix $\textbf{X}$, the Frobenious norm $\sqrt{\sum_{j} \sum_{m}|\textbf{X}_{j, m}|^2}$ is denoted by $\|\textbf{X}\|_{F}$. The symbol $\mathcal{N}_{c}(0, 1)$ denotes circularly complex Gaussian distribution with mean zero and unit variance. For an $n_{t} \times T$ matrix $\textbf{X}$, the symbol $\sigma_{j}(\textbf{X})$ denotes the $j$-th singular value of $\textbf{X}$ for $1 \leq j \leq n_{t}$. The real and imaginary parts of a complex matrix $\textbf{X}$ is denoted by $\mbox{Re}(\textbf{X})$ and $\mbox{Im}(\textbf{X})$, respectively. The symbol $\mbox{Pr}(\cdot)$ denotes the probability operator.
\section{System Model}
\label{sec2}
The $n_{t} \times n_{r}$ MIMO channel consists of a source and a destination terminal equipped with $n_{t}$ and $n_{r}$ antennas, respectively. For $1 \leq i \leq n_{t}$ and $1 \leq j \leq n_{r}$, the channel between the $i$-th transmit antenna and the $j$-th receive antenna is assumed to be flat fading and denoted by the complex number $\textbf{H}_{i, j}$. Each $\textbf{H}_{i, j}$ remains constant for a block of $T$ ($T \geq n_{t}$) complex channel uses and is assumed to take an independent realization in the next block. Statistically, we assume $\textbf{H}_{i, j} \sim ~\mathcal{N}_{c}(0, 1) ~\forall i, j$ across quasi-static intervals. The source conveys information to the destination through an $n_{t} \times T$ STBC denoted by $\mathcal{C}$. We assume that a linear design
\begin{equation}
\label{linear_design}
\textbf{X}_{\mathcal{LD}}(s_{1}, \ldots, s_{2K}) = \sum_{k = 1}^{2K} \textbf{D}_{k} s_{k},
\end{equation}
in $2K$ real variables $\textbf{s} = [s_{1} ~s_{2} ~\ldots ~s_{2K}]^{T}$ is used to generate $\mathcal{C}$ by taking values from an underlying integer constellation $\mathcal{S} \subset \mathbb{Z}$. Here, the set $\{\textbf{D}_{k} \in \mathbb{C}^{n_{t} \times T}\}_{k = 1}^{2K}$ contains the weight matrices of the design. Since we use the IF linear receiver to decode the STBC, we assume that $\mathcal{S}$ is a finite ring $\mathbb{Z}_{\sqrt{M}} = \left\{0, 1, \ldots, \sqrt{M} - 1 \right\}$ for some $M$, an even power of $2$. The symbols of $\mathcal{S}$ are appropriately shifted around the origin to reduce the transmit power, and subsequently  reverted back at the receiver to retain the ring structure on $\mathcal{S}$. If $\textbf{X}(\textbf{s}) \in \mathcal{C}$ denotes a transmitted codeword matrix such that $\mathbb{E}[|\textbf{X}_{i, t}|^{2}]= 1 ~\mbox{ for } 1 \leq i \leq n_{t}, 1 \leq t \leq T$, then the received matrix $\textbf{Y} \in \mathbb{C}^{n_{r} \times n_{t}}$ at the destination is given by
\begin{equation}
\label{signal_model}
{\textbf Y} = \sqrt{\frac{P}{n_{t}}}{\textbf H}{\textbf X}(\textbf{s}) + {\textbf Z},
\end{equation}
where $\textbf{H} \in \mathbb{C}^{n_{r} \times n_{t}}$ denotes the channel matrix, $\textbf{Z} \in \mathbb{C}^{n_{r} \times T}$ denotes the AWGN with its entries that are i.i.d. as $\mathcal{N}_{c}(0, 1)$. With this, the average receive signal power-to-noise ratio (SNR) per receive antenna is $P$. Throughout the paper, we assume a coherent MIMO channel where only the receiver has the complete knowledge of $\textbf{H}$.
\begin{figure}
\includegraphics[width=3.5in]{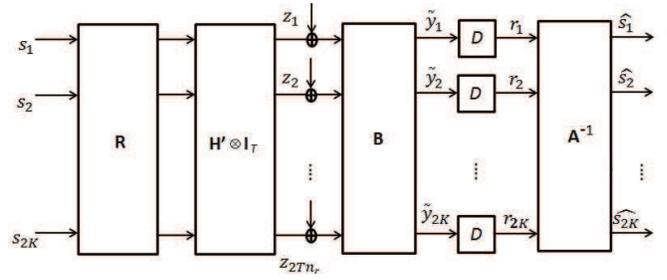}
\vspace{-1cm}
\caption{IF linear receiver to decode STBCs where $\textbf{R}$ is a code matrix obtained from vectorizing the components of the weight matrices $\{\textbf{D}_{k}\}_{k = 1}^{2K}$ in \eqref{linear_design}.}
\label{model1}
\end{figure}

\indent In the next subsection, we discuss the decoding procedure for STBCs based on the IF receiver.

\subsection{IF decoder for STBCs}

Since $\mathcal{C}$ is a linear dispersion code, the received matrix $\textbf{Y}$ in \eqref{signal_model} can be vectorized to obtain a noisy linear model as
\begin{eqnarray}
\label{linear_model}
\textbf{y} = \sqrt{\frac{P}{n_{t}}}\mathbf{\mathcal{H}}\textbf{s} + \textbf{z},
\end{eqnarray}
where $\mathcal{H} \in \mathbb{R}^{2n_{r}T \times 2K}$ is given by 
\begin{equation}
\label{H_dash1}
\mathcal{H} = (\textbf{H}' \otimes \textbf{I}_{T})\textbf{R},
\end{equation}
such that
\begin{equation}
\label{H_dash}
\textbf{H}' = \left[\begin{array}{rr}
\mbox{Re}(\textbf{H}) & -\mbox{Im}(\textbf{H})\\
\mbox{Im}(\textbf{H})  & \mbox{Re}(\textbf{H})\\
\end{array}\right] \in \mathbb{R}^{2n_{r} \times 2n_{t}},
\end{equation}
and $\textbf{R} \in \mathbb{R}^{2n_{t}T \times 2K}$ is a code matrix obtained from vectorizing the components of the weight matrices $\{\textbf{D}_{k}\}_{k = 1}^{2K}$. Here, the symbol $\otimes$ denotes the Kronecker product operator. After suitable scaling, \eqref{linear_model} can be equivalently written (without changing the notation) as 
\begin{eqnarray}
\label{new_linear_model}
\textbf{y} = \mathcal{H}\textbf{s} + \sqrt{\frac{n_{t}}{P}}\textbf{z}.
\end{eqnarray}
To the linear model in \eqref{new_linear_model}, we apply the IF linear receiver as shown in Fig. \ref{model1} to recover $\hat{{\textbf s}}$, a vector of decoded information symbols. We apply the constraint $2K \leq 2n_{r}T$ to avoid the case of under determined linear system. In order to decode an STBC using the IF receiver, the components of $\textbf{s}$ are restricted to take values from a subset of integers such that $\{ \mathcal{H}\textbf{s} ~|~ \textbf{s} \in \mathcal{S}^{2K}\}$ is a lattice code carved from the lattice $\Lambda = \left\lbrace \mathcal{H}\textbf{s} ~|~ \textbf{s} \in \mathbb{Z}^{2K} \right\rbrace.$ This is the reason for choosing the components of $\textbf{s}$ from the ring $\mathbb{Z}_{\sqrt{M}}$.


The goal of the IF receiver is to project $\mathcal{H}$ onto a non-singular integer matrix ${\textbf A} \in \mathbb{Z}^{2K \times 2K}$ by left multiplying $\mathcal{H}$ with a receiver filtering matrix ${\textbf B} \in \mathbb{R}^{2K \times 2n_{r}T}$. After post processing by {\textbf B}, we get
\begin{equation}~\label{eq:LRmodel}
\tilde{{\textbf y}} \triangleq {\textbf B}{\textbf y} = {\textbf B}\mathcal{H}{\textbf s}+\sqrt{\frac{n_{t}}{P}}{\textbf B}{\textbf z}.
\end{equation}
The above signal model is applicable to all linear receivers including the ZF, MMSE (both cases ${\textbf A}={\textbf I}_{2K}$), and IF (where ${\textbf A}$ is invertible over $\mathcal{S}$).
For the IF receiver formulation, we write
\begin{equation}~\label{eq:IFmodel}
\tilde{{\textbf y}}= {\textbf A}{\textbf s}+({\textbf B}\mathcal{H}-{\textbf A}){\textbf s}+\sqrt{\frac{n_{t}}{P}}{\textbf B}{\textbf z},
\end{equation}
where ${\textbf A}{\textbf s}$ is the desired signal component, and the effective noise is $({\textbf B}\mathcal{H}-{\textbf A}){\textbf s}+\sqrt{\frac{n_{t}}{P}}{\textbf B}{\textbf z}$. In particular, the effective noise power along the $m$-th row (henceforth referred to as the $m$-th layer) of $\tilde{{\textbf y}}$ for $1 \leq m \leq 2K$ is defined as
\begin{equation}~\label{quntizederrplusnoise}
g({\textbf a}_m,{\textbf b}_m)\triangleq \|{\textbf b}_m \mathcal{H}-{\textbf a}_m\|^2 \bar{E} + \frac{n_{t}}{2P}\|{\textbf b}_m\|^2,
\end{equation}
where ${\textbf a}_m$ and ${\textbf b}_m$ denote the $m$-th row of ${\textbf A}$ and $\textbf{ B}$, respectively, and $\bar{E}$ is the average energy of the constellation $\mathcal{S}$. A layer based model of the IF receiver architecture is as shown in Fig. \ref{model1}. In order to reduce the effective noise power for each layer, the term $g({\textbf a}_m,{\textbf b}_m)$ has to be minimized for each $m$ by appropriately selecting the matrices ${\textbf A}$ and ${\textbf B}$. For methods to select $\textbf{A}$ and $\textbf{B}$, we refer the reader to \cite{zhan12}, \cite{Sakzad14-1}. In order to uniquely recover the information symbols, the matrix ${\textbf A}$ must be invertible over the ring $\mathcal{S}$. In this work we are only interested in the STBC design for the IF receiver and hence, we assume that the optimal values of $\textbf{A}$ and $\textbf{B}$ are readily available.

\indent We now present a procedure for decoding STBCs using the IF linear receiver. With reference to the signal model in Section II, the decoding procedure exploits the ring structure of the constellation $\mathcal{S} = \mathbb{Z}_{\sqrt{M}} = \left\{0, 1, \ldots, \sqrt{M} - 1 \right\}$ with operations mod $\sqrt{M}$. The decoding procedure is as given below:
\begin{itemize}
\item \textbf{Step 1 (Infinite lattice decoding over $\mathbb{Z}$):} Each component of $\tilde{{\textbf y}}$ is decoded to the nearest point in $\mathbb{Z}$ to get $\hat{{\textbf y}} = \lfloor \tilde{{\textbf y}}\rceil,$ where $\lfloor \cdot \rceil$ denotes the round operation.
\item \textbf{Step 2 (Modulo operation onto $\mathcal{S}$):} Perform the modulo $\sqrt{M}$ operation on the components of $\hat{{\textbf y}}$ to obtain ${\textbf r} = \left(\hat{{\textbf y}} \mbox{ mod } \sqrt{M}\right) \in \mathcal{S}^{2K}.$
\item \textbf{Step 3 (Solving system of linear equations):} Solve the system of linear equations ${\textbf r} = {\textbf A}\hat{{\textbf s}}$ over the ring $\mathbb{Z}_{\sqrt{M}}$. If $\textbf{A}$ is invertible over the ring $\mathcal{S}$, then a unique solution is guaranteed. After solving the system of linear equations, information symbols are recovered from the components of $\hat{{\textbf s}}$. 
\end{itemize}

\indent In the above decoding procedure, \textbf{Step 2} and \textbf{Step 3} are deterministic, while \textbf{Step 1} involves recovering linear functions of the information symbols amidst noise. In the next section, we obtain an upper bound on the probability of error for \textbf{Step 1}, and then derive a design criterion for full-diversity STBCs.


\section{Design Criterion for STBCs}
\label{sec4}

We first present an upper bound on the probability of error for \textbf{Step 1}, i.e., decoding the $m$-th layer in the infinite lattice $\mathbb{Z}$ for $1 \leq m \leq 2K$. The input to the decoder in \textbf{Step 1} is
\begin{equation*}
\tilde{{\textbf y}}_{m} = {\textbf a}_{m}{\textbf s} + ({\textbf b_{m}}\mathcal{H} - {\textbf a}_{m}){\textbf s} + \sqrt{\frac{n_{t}}{P}}\textbf{b}_{m}\textbf{z},
\end{equation*}
where $\tilde{{\textbf y}}_{m}$ denotes the $m$-th component of $\tilde{{\textbf y}}$ and $({\textbf b_{m}\mathcal{H}} - {\textbf a}_{m}){\textbf s}$ denotes the quantization noise term. 
For such a set-up, the effective noise power is given in \eqref{quntizederrplusnoise}. Note that the effective noise is not Gaussian distributed due to the quantization noise term. However, since the optimum value of $\textbf{b}_{m}$ that minimizes \eqref{quntizederrplusnoise} given $\textbf{a}_{m}$ is
\begin{equation*}
\textbf{b}_{m} = \textbf{a}_{m}\mathcal{H}^{T}\left(\frac{n_{t}}{P \bar{E}}\textbf{I}_{2n_{r}T} + \mathcal{H}\mathcal{H}^{T}\right)^{-1},
\end{equation*}
for large values of $P$, the above expression simplifies to $\textbf{b}_{m} = \textbf{a}_{m}\mathcal{H}^{-1}$, where $\mathcal{H}^{-1}$ denotes the pseudo-inverse of $\mathcal{H}$. With this, for large values of $P$, the quantization noise term vanishes and the effective noise power is approximated by
\begin{equation*}
g({\textbf a}_m,{\textbf b}_m) = \frac{n_{t}}{2P}\|{\textbf b}_m\|^2.
\end{equation*}

\noindent Since we are interested in the full-diversity property of STBCs, which is a large SNR metric, we assume large values of $P$ in the probability of error analysis. Henceforth, we denote the probability of error for decoding the $m$-th layer in the infinite lattice  $\mathbb{Z}$ by $P_{e}(m, \mathcal{H}, \mathbb{Z})$. The following Lemma provides an upper bound on $P_{e}(m, \mathcal{H}, \mathbb{Z})$.\\
\begin{lemma}\textbf{(Upper Bound on Probability of Error)}
For large values of $P$, the term $P_{e}(m, \mathcal{H}, \mathbb{Z})$ is upper bounded as
\begin{equation}
\label{P_e_bound_2}
P_{e}(m, \mathcal{H}, \mathbb{Z}) \leq \mbox{exp}\left(-cP\epsilon_{1}^{2}(\Lambda)\right),
\end{equation}
where $c$ is some constant independent of $P$ and $\epsilon_{1}^{2}(\Lambda)$ is the minimum squared Euclidean distance of the lattice $\Lambda = \left\lbrace \textbf{d}\mathcal{H}^{T} ~|~ \textbf{d} \in \mathbb{Z}^{2K} \right\rbrace.$
\end{lemma}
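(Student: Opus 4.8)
The plan is to reduce the layer-$m$ decoding error in \textbf{Step 1} to a one-dimensional rounding error against Gaussian noise, bound its probability by a Chernoff-type tail estimate, and then convert the resulting exponent from the filtering vector $\mathbf{b}_m$ into the minimum distance of $\Lambda$ by a lattice-duality (transference) argument. Concretely, I would first write the decision statistic for layer $m$ as $\tilde{\mathbf{y}}_m = \mathbf{a}_m\mathbf{s} + (\mathbf{b}_m\mathcal{H}-\mathbf{a}_m)\mathbf{s} + \sqrt{n_t/P}\,\mathbf{b}_m\mathbf{z}$. Since $\mathbf{a}_m \in \mathbb{Z}^{2K}$ and $\mathbf{s}\in\mathcal{S}^{2K}\subset\mathbb{Z}^{2K}$, the noiseless part $\mathbf{a}_m\mathbf{s}$ is an integer, so the rounding in \textbf{Step 1} succeeds exactly when the additive perturbation has magnitude below $1/2$. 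Invoking the large-$P$ simplifications already recorded before the lemma, namely $\mathbf{b}_m=\mathbf{a}_m\mathcal{H}^{-1}$, the vanishing of the quantization term $(\mathbf{b}_m\mathcal{H}-\mathbf{a}_m)\mathbf{s}$, and the residual being zero-mean Gaussian of variance $\sigma_m^2 = g(\mathbf{a}_m,\mathbf{b}_m)=\frac{n_t}{2P}\|\mathbf{b}_m\|^2$, the error event becomes $\{|\mathcal{N}(0,\sigma_m^2)|>1/2\}$. With the Gaussian tail function $Q(x)=\mathrm{Pr}(\mathcal{N}(0,1)>x)$ this gives $P_e(m,\mathcal{H},\mathbb{Z}) = 2Q\!\left(1/(2\sigma_m)\right)$, and the standard bound $Q(x)\le \tfrac{1}{2}\exp(-x^2/2)$ yields $P_e(m,\mathcal{H},\mathbb{Z}) \le \exp\!\left(-P/(4n_t\|\mathbf{b}_m\|^2)\right)$.

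The remaining and most delicate step is to show that the exponent $1/\|\mathbf{b}_m\|^2$ is, up to a constant, lower bounded by $\epsilon_1^2(\Lambda)$. Using $\mathbf{b}_m=\mathbf{a}_m\mathcal{H}^{-1}$ with the pseudo-inverse, one computes $\|\mathbf{b}_m\|^2 = \mathbf{a}_m(\mathcal{H}^T\mathcal{H})^{-1}\mathbf{a}_m^T$, which is the squared length of the integer vector $\mathbf{a}_m$ in the lattice whose Gram matrix is $(\mathcal{H}^T\mathcal{H})^{-1}$, i.e.\ the dual lattice $\Lambda^*$ of $\Lambda$ (the latter having Gram matrix $\mathcal{H}^T\mathcal{H}$). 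Because the IF receiver is assumed optimal and $\mathbf{A}$ is full rank over $\mathbb{Z}$, each row realizes a successive minimum of $\Lambda^*$, so $\|\mathbf{b}_m\|^2 \le \lambda_{2K}^2(\Lambda^*)$, the largest successive minimum. I would then invoke a lattice transference inequality (of Banaszczyk type), $\lambda_1(\Lambda)\,\lambda_{2K}(\Lambda^*)\le 2K$, to obtain $\|\mathbf{b}_m\|^2 \le \lambda_{2K}^2(\Lambda^*) \le (2K)^2/\lambda_1^2(\Lambda) = (2K)^2/\epsilon_1^2(\Lambda)$.

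Combining the two halves gives $P_e(m,\mathcal{H},\mathbb{Z}) \le \exp\!\left(-cP\,\epsilon_1^2(\Lambda)\right)$ with $c = 1/(4n_t(2K)^2)$, a constant independent of $P$, as claimed. I expect the main obstacle to be precisely the duality step: the quantity that governs the effective noise of the IF layer is a \emph{dual}-lattice length $\|\mathbf{b}_m\|^2$, whereas the stated bound is phrased in terms of the \emph{primal} minimum distance $\epsilon_1^2(\Lambda)$, so the transference inequality, together with the assumption that the optimal $\mathbf{A}$ attains the successive minima of $\Lambda^*$, is exactly what bridges the gap (any suboptimality merely rescales $c$). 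By comparison, the high-$P$ vanishing of the quantization term and the Gaussian tail estimate are routine.
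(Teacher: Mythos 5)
Your proposal is correct and follows essentially the same route as the paper: a Gaussian/Chernoff tail bound on the Step~1 rounding error with exponent $P/(4n_t\|\mathbf{b}_m\|^2)$, followed by bounding $\|\mathbf{b}_m\|^2=\|\mathbf{a}_m\mathcal{H}^{-1}\|^2$ by the $2K$-th successive minimum of the dual lattice and then transferring back to $\epsilon_1^2(\Lambda)$. The only difference is in the transference step, where the paper cites Lemma~4 of \cite{zhan12} to get $\epsilon_{2K}^{2}(\Lambda^{*})\le (2K^{3}+3K^{2})/\epsilon_{1}^{2}(\Lambda)$ while you invoke Banaszczyk's bound $\lambda_1(\Lambda)\lambda_{2K}(\Lambda^{*})\le 2K$, which merely yields a (slightly better) constant $c$ and is immaterial to the statement.
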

\begin{proof}
Since the minimum Euclidean distance of $\mathbb{Z}$ is unity, an error in \textbf{Step 1} is declared if $\sqrt{\frac{n_{t}}{P}}|\textbf{b}_{m}\textbf{z}| \geq \frac{1}{2}$. Therefore, we have
\begin{eqnarray}
P_{e}(m, \mathcal{H}, \mathbb{Z}) \triangleq \mbox{Pr}\left(\sqrt{\frac{n_{t}}{P}}|\textbf{b}_{m}\textbf{z}| \geq \frac{1}{2}\right).
\end{eqnarray}
Since $\textbf{b}_{m}\textbf{z}$ is Gaussian distributed, using the Chernoff bound, $P_{e}(m, \mathcal{H}, \mathbb{Z})$ is bounded as 
\begin{eqnarray}
P_{e}(m, \mathcal{H}, \mathbb{Z}) & \leq & \mbox{exp}\left(-{\frac{P}{4n_{t}\|{\textbf b}_m\|^2}}\right) \nonumber\\
& = & \mbox{exp}\left(-{\frac{P}{4n_{t}\|{\textbf a}_m \mathcal{H}^{-1}\|^2}}\right). \label{P_e_bound1}
\end{eqnarray}
If $\textbf{a}_{m}$ and $\textbf{b}_{m}$ are chosen appropriately as in \cite{zhan12}, then the upper bound 
\begin{equation}
\label{bound1}
\|{\textbf a}_m \mathcal{H}^{-1}\|^2 \leq \epsilon_{2K}^{2}(\Lambda^{*})
\end{equation}
holds good where $\epsilon_{2K}^{2}(\Lambda^{*})$ denotes the $2K$-th successive minimum of the dual lattice
\begin{equation*}
\Lambda^{*} = \left\lbrace \textbf{d}\mathcal{H}^{-1} ~|~ \forall \textbf{d} \in \mathbb{Z}^{2K} \right\rbrace.
\end{equation*}
Here $\mathcal{H}^{-1}$ is a generator of the dual lattice $\Lambda^{*}$ of the lattice given by $\Lambda = \left\lbrace \textbf{d}\mathcal{H}^{T} ~|~ \forall \textbf{d} \in \mathbb{Z}^{2K} \right\rbrace,$ which is generated by the rows of $\mathcal{H}^{T}$. Thus we have the relation (see Lemma $4$ in \cite{zhan12})
\begin{equation}
\label{bound2}
\epsilon_{2K}^{2}(\Lambda^{*}) \leq \frac{2K^{3} + 3K^{2}}{\epsilon_{1}^{2}(\Lambda)},
\end{equation}
where $\epsilon_{1}^{2}(\Lambda)$ is the minimum squared Euclidean distance of the lattice $\Lambda$. Using the upper bounds of \eqref{bound1} and \eqref{bound2} in \eqref{P_e_bound1}, the  probability of error for decoding the $m$-th layer is upper bounded as
\begin{equation}
\label{P_e_bound_2}
P_{e}(m, \mathcal{H}, \mathbb{Z}) \leq \mbox{exp}\left(-cP\epsilon_{1}^{2}(\Lambda)\right),
\end{equation}
where $c = \frac{1}{4n_{t}(2K^{3} + 3K^{2})}$ is a constant. This completes the proof.
\end{proof}

We now introduce a new property of linear designs to establish a relation between the upper bound in $\eqref{P_e_bound_2}$ and the structure of linear designs.

\indent An infinite STBC $\mathcal{C}_{\infty}$ generated from a linear design $\textbf{X}_{\mathcal{LD}}$ in $2K$ variables is given by
\begin{equation*}
\mathcal{C}_{\infty} \triangleq \left\lbrace \textbf{X} = \sum_{k = 1}^{2K} \textbf{D}_{k} s_{k} ~|~ s_{k} \in \mathbb{Z} ~\forall k \right\rbrace. 
\end{equation*}
As a special case, a finite STBC $\mathcal{C}$ can be obtained from $\textbf{X}_{\mathcal{LD}}$ by restricting the variables $s_{k}$ to $\mathcal{S}$ as
\begin{equation*}
\mathcal{C} \triangleq \left\lbrace \textbf{X} = \sum_{k = 1}^{2K} \textbf{D}_{k} s_{k} ~|~ s_{k} \in \mathcal{S} ~\forall k \right\rbrace. 
\end{equation*}
We let $\sigma_{min}(\textbf{X}) = \min_{1 \leq j \leq n_{t}} \sigma_{j}(\textbf{X})$ denote the minimum singular value of $\textbf{X}$. With that, the minimum singular value of $\mathcal{C}_{\infty}$ is given by
\begin{equation*}
\sigma_{min}(\mathcal{C}_{\infty}) \triangleq \inf_{\textbf{X} \in \mathcal{C}_{\infty}, \textbf{X} \neq \textbf{0}} \sigma_{min}(\textbf{X}).
\end{equation*}
Using the above definition of the minimum singular value of the infinite code $\mathcal{C}_{\infty}$, we define a special class of linear designs as follows:
\begin{definition}\textbf{(Non-vanishing singular value property)}
A linear design $\textbf{X}_{\mathcal{LD}}$ is said to have the non-vanishing singular value (NVS) property over $\mathbb{Z}$ if the corresponding infinite STBC $\mathcal{C}_{\infty}$ satisfies $\sigma_{min}(\mathcal{C}_{\infty}) \neq 0.$
\end{definition}

\indent We now connect the NVS property of $\textbf{X}_{\mathcal{LD}}$ and the full-diversity property of $\mathcal{C}$ for the IF receiver in the following theorem.
\begin{theorem}\textbf{(Full-Diversity Design Criterion)}
If the linear design $\textbf{X}_{\mathcal{LD}}$ has the NVS property, then any STBC $\mathcal{C}$ generated from $\textbf{X}_{\mathcal{LD}}$ over $\mathcal{S}$ provides full-diversity with the IF linear receiver.
\end{theorem}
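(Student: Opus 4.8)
The plan is to combine the per-layer bound of the Lemma with a union bound over the $2K$ layers, and then average the resulting conditional error probability over the random channel $\textbf{H}$. Since the constant $c = \frac{1}{4n_{t}(2K^{3} + 3K^{2})}$ supplied by the Lemma does not depend on the layer index $m$, the union bound over Step 1 gives
\begin{equation*}
P_{e}(\mathcal{H}, \mathbb{Z}) \leq \sum_{m = 1}^{2K} P_{e}(m, \mathcal{H}, \mathbb{Z}) \leq 2K \, \mbox{exp}\left(-cP\epsilon_{1}^{2}(\Lambda)\right).
\end{equation*}
The decisive remaining task is to lower bound $\epsilon_{1}^{2}(\Lambda)$ by a product of the NVS constant $\sigma_{min}^{2}(\mathcal{C}_{\infty})$ and a channel quantity whose distribution I can control.

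First I would make explicit that $\epsilon_{1}^{2}(\Lambda) = \min_{\textbf{d} \in \mathbb{Z}^{2K} \setminus \{\textbf{0}\}} \|\mathcal{H}\textbf{d}^{T}\|^{2}$ and translate this back to codeword matrices. Because $\mathcal{H} = (\textbf{H}' \otimes \textbf{I}_{T})\textbf{R}$ is assembled from the real representation $\textbf{H}'$ of $\textbf{H}$ and from the vectorized weight matrices stored in $\textbf{R}$, and because the real representation is norm preserving, I obtain $\|\mathcal{H}\textbf{d}^{T}\|^{2} = \|\textbf{H}\textbf{X}_{\mathcal{LD}}(\textbf{d})\|_{F}^{2}$ for every integer vector $\textbf{d}$, where $\textbf{X}_{\mathcal{LD}}(\textbf{d}) = \sum_{k} \textbf{D}_{k}d_{k} \in \mathcal{C}_{\infty}$. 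The core inequality would then come from the SVD $\textbf{X} = \textbf{U}\Sigma\textbf{V}^{H}$ of any nonzero codeword: by unitary invariance of the Frobenius norm, $\|\textbf{H}\textbf{X}\|_{F}^{2} = \sum_{j = 1}^{n_{t}} \sigma_{j}^{2}(\textbf{X}) \|\textbf{g}_{j}\|^{2}$, where $\textbf{g}_{j}$ is the $j$-th column of $\textbf{H}\textbf{U}$; bounding each $\sigma_{j}^{2}(\textbf{X})$ below by $\sigma_{min}^{2}(\textbf{X})$ and using $\sum_{j}\|\textbf{g}_{j}\|^{2} = \|\textbf{H}\textbf{U}\|_{F}^{2} = \|\textbf{H}\|_{F}^{2}$ yields $\|\textbf{H}\textbf{X}\|_{F}^{2} \geq \sigma_{min}^{2}(\textbf{X})\|\textbf{H}\|_{F}^{2}$. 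Invoking the NVS property (together with injectivity of the design, so that every nonzero $\textbf{d}$ produces a nonzero $\textbf{X}_{\mathcal{LD}}(\textbf{d})$) then gives the uniform bound $\epsilon_{1}^{2}(\Lambda) \geq \sigma_{min}^{2}(\mathcal{C}_{\infty})\|\textbf{H}\|_{F}^{2}$, with $\sigma_{min}^{2}(\mathcal{C}_{\infty}) > 0$ a constant independent of both $\textbf{H}$ and $P$.

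Substituting this into the union bound and averaging over the channel, I would write $\bar{P}_{e} \leq 2K \, \mathbb{E}\left[\mbox{exp}\left(-cP\sigma_{min}^{2}(\mathcal{C}_{\infty})\|\textbf{H}\|_{F}^{2}\right)\right]$ for the channel-averaged error probability $\bar{P}_{e}$. Since the entries $\textbf{H}_{i,j} \sim \mathcal{N}_{c}(0,1)$ are i.i.d., $\|\textbf{H}\|_{F}^{2}$ is a sum of $n_{t}n_{r}$ i.i.d. unit-mean exponential variables, hence Gamma distributed with $n_{t}n_{r}$ degrees of freedom, whose moment generating function is available in closed form. Evaluating it gives $\bar{P}_{e} \leq 2K\left(1 + cP\sigma_{min}^{2}(\mathcal{C}_{\infty})\right)^{-n_{t}n_{r}}$, which decays as $P^{-n_{t}n_{r}}$ and thus establishes diversity order $n_{t}n_{r}$, the maximum possible; this is precisely full diversity.

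The hard part will be the chain in the middle paragraph: carefully justifying the norm-preserving passage from the real lattice vector $\mathcal{H}\textbf{d}^{T}$ to the complex matrix $\textbf{H}\textbf{X}_{\mathcal{LD}}(\textbf{d})$, and then upgrading the per-codeword SVD inequality into a uniform lower bound on the lattice minimum distance that holds over \emph{all} nonzero integer vectors, including those of large norm, where the NVS property is exactly what prevents the bound from degenerating to zero. By contrast, the final averaging step is a routine moment-generating-function computation once the Gamma structure of $\|\textbf{H}\|_{F}^{2}$ has been identified.
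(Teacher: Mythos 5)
Your proposal is correct and follows essentially the same route as the paper: the per-layer Chernoff bound from the Lemma, the SVD identity $\|\textbf{H}\textbf{X}\|_{F}^{2}=\sum_{j}\sigma_{j}^{2}(\textbf{X})\|\textbf{g}_{j}\|^{2}$, the NVS lower bound on $\epsilon_{1}^{2}(\Lambda)$, the Gamma/chi-square averaging, and the union bound over layers. Your only (harmless) deviations are cosmetic: you collapse $\sum_{j}\|\textbf{g}_{j}\|^{2}$ to $\|\textbf{H}\|_{F}^{2}$ deterministically rather than invoking the rotation invariance of $\textbf{G}=\textbf{H}\textbf{U}$ as the paper does, and you explicitly flag the injectivity of $\textbf{d}\mapsto\textbf{X}_{\mathcal{LD}}(\textbf{d})$, a point the paper leaves implicit.
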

\begin{proof}
We write $\epsilon_{1}^{2}(\Lambda)$ in \eqref{P_e_bound_2} as
\begin{equation*}
\epsilon_{1}^{2}(\Lambda) = \|{\textbf d} \mathcal{H}^{T}\|^2 = \|\mathcal{H}{\textbf d}^{T}\|^2
\end{equation*}
for some ${\textbf d} \in \mathbb{Z}^{2K}$. Note that the components of ${\textbf d}$ that result in $\epsilon_{1}^{2}(\Lambda)$ need not be in $\mathcal{S}$. Further, $\epsilon_{1}^{2}(\Lambda)$ can be written as
\begin{eqnarray*}
\epsilon_{1}^{2}(\Lambda) & = & \|\textbf{H}\textbf{X}\|_{F}^2 = \mbox{Trace} \left(\textbf{H}\textbf{U}\Sigma\textbf{U}^{H}\textbf{H}^{H}\right),
\end{eqnarray*}
where $\textbf{X} \in \mathcal{C}_{\infty}$, $\textbf{U}\Sigma\textbf{U}^{H}$ is a singular value decomposition of $\textbf{X}\textbf{X}^{H}$, and $\Sigma$ is the diagonal matrix comprising of the square of the singular values $\sigma_{j}(\textbf{X})$ for $1 \leq j \leq n_{t}$. By denoting $\textbf{HU} = \textbf{G}$, we write
\begin{equation*}
 \epsilon_{1}^{2}(\Lambda) = \mbox{Trace} \left(\textbf{G}\Sigma\textbf{G}^{H}\right)
 = \sum_{j = 1}^{n_{t}} \|\textbf{g}_{j}\|^2 \sigma^{2}_{j}(\textbf{X}),
\end{equation*}
where $\textbf{g}_{j}$ is the $j$-th column of $\textbf{G}$ and $\sigma_{j}(\textbf{X})$ denotes the $j$-th singular value of $\textbf{X}$, which is a function of the channel $\textbf{H}$. If the STBC has the NVS property, then for any $\textbf{d} \in \mathbb{Z}^{2K}$ we apply $\sigma^{2}_{j}(\textbf{X}) \geq \sigma^{2}_{min}(\mathcal{C}_{\infty}) ~\forall j,$ and hence,
\begin{eqnarray*}
\epsilon_{1}^{2}(\Lambda) & \geq & \sum_{j = 1}^{n_{t}} \|\textbf{g}_{j}\|^2 \sigma^{2}_{min}(\mathcal{C}_{\infty}).
\end{eqnarray*}
Using the above lower bound, we upper bound the expression in \eqref{P_e_bound_2} as
\begin{eqnarray*}
P_{e}(m, \mathcal{H}, \mathbb{Z}) \leq \mbox{exp}\left(-c P \sigma^{2}_{min}(\mathcal{C}_{\infty}) \sum_{j = 1}^{n_{t}} \|\textbf{g}_{j}\|^2\right).
\end{eqnarray*}
Since $\textbf{U}$ is a unitary matrix, the distribution of $\textbf{G}$ is same as that of $\textbf{H}$. Also, as $\sigma^{2}_{min}(\mathcal{C}_{\infty})$ is a constant and independent of $\mathcal{H}$, the random variables in the exponent are $\{ \|\textbf{g}_{j}\|^2 \}$, which are chi-square distributed with degrees of freedom $2n_{r}$. By averaging $P_{e}(m, \mathcal{H}, \mathbb{Z})$ over different realizations of $\|\textbf{g}_{j}\|^2$, we obtain
\begin{eqnarray*}
\mathbb{E}_{\mathcal{H}}[P_{e}(m, \mathcal{H}, \mathbb{Z})] \triangleq P_{e}(m, \mathbb{Z}) \leq \left(\frac{1}{1 + cP\sigma^{2}_{min}(\mathcal{C}_{\infty})}\right)^{n_{t}n_{r}}
\end{eqnarray*}
Since $P$ is dominant and $\sigma_{min}(\mathcal{C}_{\infty}) \neq 0$, $P_{e}(m, \mathbb{Z})$ is upper bounded as
\begin{eqnarray*}
P_{e}(m, \mathbb{Z}) < \left(\frac{1}{cP \sigma^{2}_{min}(\mathcal{C}_{\infty})}\right)^{n_{r} n_{t}}.
\end{eqnarray*}
With the above result on the probability of error for each layer, we now setup an upper bound on the overall probability of error for \textbf{Step 1}. We declare an error in \textbf{Step 1} if there is a decoding error in any one of the $2K$ layers. Using the union bound, the overall probability of error is bounded as
\begin{equation*}
\mbox{Pr}(\hat{{\textbf y}} \neq \textbf{As}~|~\mathcal{H}) \leq \sum_{k = 1}^{2K} P_{e}(m, \mathcal{H}, \mathbb{Z}).
\end{equation*}
After taking expectation, the average probability of error for decoding \textbf{Step 1} is
\begin{eqnarray}
\mbox{Pr}(\hat{{\textbf y}} \neq \textbf{As}) & \triangleq & \mathbb{E}_{\mathcal{H}}[\mbox{Pr}(\hat{{\textbf y}} \neq \textbf{As}~|~\mathcal{H})] \nonumber \\
& \leq & \sum_{k = 1}^{2K} \mathbb{E}_{\mathcal{H}}[P_{e}(m, \mathcal{H}, \mathbb{Z})] \nonumber \\
& = & \sum_{k = 1}^{2K} P_{e}(m, \mathbb{Z}) \nonumber \\
& \leq & \frac{c'}{P^{n_{r} n_{t}}} \label{over_all_bound},
\end{eqnarray}
where $c' = \frac{2K}{(c\sigma^{2}_{min}(\mathcal{C}_{\infty}))^{n_{t}n_{r}}}$. Notice that the upper bound in \eqref{over_all_bound} is only a function of $\mathcal{C}_{\infty}$, and it is independent of the constellation $\mathcal{S}$. This shows that any STBC carved from a linear design with the NVS property provides diversity of $n_{t}n_{r}$ independent of the size of $\mathcal{S}$.
\end{proof}

\section{Simulation Results on STBCs for IF Receivers}
Through simulation results, we show that a linear design with the NVS property provides full-diversity for the IF receiver. 
We use the Alamouti design given by 
\begin{equation}
\label{example_code_with_nvs}
\textbf{X}_{A} = \left[\begin{array}{cc}
x_{1} & x_{2}\\
-x^{*}_{2} & x^{*}_{1}\\
\end{array}\right],
\end{equation}
to showcase the results. Using the structure of the above design, it can be shown that $\sigma^{2}_{min}(\textbf{X}_{A}) = |x_{1}|^2 + |x_{2}|^{2}.$ From the above expression, it is straightforward to observe that $\sigma_{min}(\mathcal{C}_{\infty}) = 1$ for $\textbf{X}_{A}$. In Fig. \ref{ber1}, we present the bit error rate (BER) of the Alamouti code for the $2 \times 1$ MIMO channel when decoded with (\emph{i}) the IF receiver, and (\emph{ii}) the ML decoder. The plots confirm that the Alamouti code provides full-diversity with the IF receiver. For the simulation results, the method proposed in \cite{Sakzad14-1} is used throughout the paper to compute the $\textbf{A}$ and $\textbf{B}$ matrices for the IF receiver. It is well known that Alamouti code is ML decodable with lower computational complexity than the IF receiver. Despite its increased complexity, we have used the IF receiver for Alamouti code only to demonstrate that linear designs with the NVS property provide full diversity for the IF linear receiver.

\begin{figure}
\includegraphics[scale = 0.6]{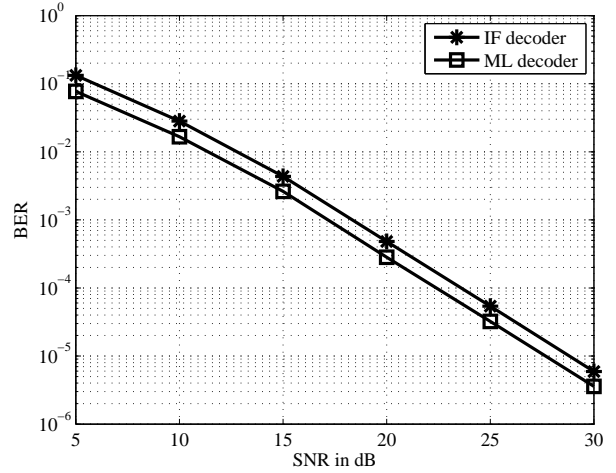}
\vspace{-0.5cm}
\caption{BER comparison of Alamouti code with IF linear receiver and the ML decoder.}
\label{ber1}
\end{figure}


\section{Diversity Results of IF receiver for the V-BLAST Encoding Scheme}
\label{sec5}

\begin{figure}
\includegraphics[scale = 0.36]{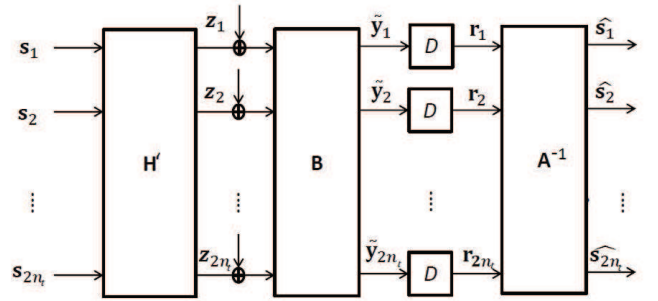}
\vspace{-0.6cm}
\caption{IF receiver for the V-BLAST scheme where $\textbf{H}'$ is given in \eqref{H_dash}.}
\label{IF_model_no_stbc}
\end{figure}

In the seminal paper \cite{zhan12} on the IF receiver, a V-BLAST scheme is employed  at the transmitter where the information symbols transmitted across different antennas are independent. In particular, the authors of  \cite{zhan12} were interested in characterizing the diversity multiplexing trade-off (DMT) of the MIMO system with IF receiver. In order to derive the DMT, the system analysis in \cite{zhan12} is based on the outage probability. Hitherto, we are not aware of any results that present the diversity results of the IF receiver based on the approach of \cite{TSC}, \cite{ShX}.

\indent In this section we present an error probability analysis for the receiver architecture introduced in \cite{zhan12} shown in Fig. \ref{IF_model_no_stbc}, and show that the IF linear receivers provide full receive diversity. This analysis is a special case of the analysis discussed in Section \ref{sec4}. We consider the uncoded case i.e, $T = 1$ for the layered architecture in our analysis. With reference to the setting discussed in Section \ref{sec2}, the uncoded layered architecture corresponds to the use of a linear design
\begin{equation*}
\textbf{X}_{\mathcal{LD}}(s_{1}, \ldots, s_{2n_{t}}) = \sum_{k = 1}^{2n_{t}} \textbf{D}_{k} s_{k},
\end{equation*}
where $\textbf{D}_{k} \in \{ \textbf{e}_{1}, \imath\textbf{e}_{1}, \textbf{e}_{2}, \imath\textbf{e}_{2}, \ldots, \textbf{e}_{n_{t}}, \imath\textbf{e}_{n_{t}} \}$ such that $\{ \textbf{e}_{k} \}_{k = 1}^{n_{t}}$ denotes the standard basis set in $\mathbb{R}^{n_{t}}$. Similar to Section \ref{sec2}, the received matrix $\textbf{Y} \in \mathbb{C}^{n_{r} \times 1}$ in \eqref{signal_model} can be vectorized to obtain a noisy linear model from $\mathbb{R}^{2n_{t}}$ to $\mathbb{R}^{2n_{r}}$ as
\begin{eqnarray}
\label{linear_model_new}
\textbf{y} = \sqrt{\frac{P}{n_{t}}}\mathbf{\mathcal{H}}\textbf{s} + \textbf{z},
\end{eqnarray}
where $\mathcal{H}$ becomes $\textbf{H}'$ after replacing $T = 1$ and $\textbf{R} =  \textbf{I}_{2n_{t}}$ in \eqref{H_dash1}. 
After applying the decoding procedure in Section \ref{sec2} and following the analysis in Section \ref{sec4}, we upper bound the probability of decoding error for the $m$-th layer in \textbf{Step 1} as
\begin{equation}
\label{P_e_bound_3}
P_{e}(m, \mathcal{H}, \mathbb{Z}) \leq \mbox{exp}\left(-cP\epsilon_{1}^{2}(\Lambda)\right),
\end{equation}
where $1 \leq m \leq 2n_{t}$, $c = \frac{1}{4n_{t}(2n_{t}^{3} + 3n_{t}^{2})}$ is some constant and $\epsilon_{1}^{2}(\Lambda)$ is the minimum squared Euclidean distance of the lattice $\Lambda = \left\lbrace \textbf{d}\mathcal{H}^{T} ~|~ \forall \textbf{d} \in \mathbb{Z}^{2n_{t}} \right\rbrace.$

Unlike Section \ref{sec2} where $n_{t} \times T$ dimensional matrices are codewords, in this case, $n_{t} \times 1$ dimensional complex vectors are codewords. We now discuss the singular value properties of these vectors to assist the proof for diversity order of the IF receiver. For any $\textbf{x} \in \mathbb{Z}[\imath]^{n_{t}}$, let $\sigma_{1}(\textbf{x})$ denote the non-zero singular value of $\textbf{x}$. We define the minimum non-zero singular value of $\mathcal{C}_{\infty} = \mathbb{Z}[\imath]^{n_{t}}$ as $\sigma_{min}(\mathcal{C}_{\infty}) = \inf_{\textbf{x} \in \mathcal{C}_{\infty}, \textbf{x} \neq \textbf{0}} \sigma_{1}(\textbf{x}).$ We have the following result on $\sigma_{min}(\mathcal{C}_{\infty})$.

\begin{lemma}
\label{lemma2}
For $\mathcal{C}_{\infty} = \mathbb{Z}[\imath]^{n_{t}}$, we have $\sigma_{min}(\mathcal{C}_{\infty}) \geq 1$.
\end{lemma}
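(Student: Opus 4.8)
The plan is to reduce the claim about singular values to an elementary lower bound on the Euclidean norm of Gaussian-integer vectors. First I would observe that, since each codeword $\textbf{x} \in \mathbb{Z}[\imath]^{n_{t}}$ is an $n_{t} \times 1$ vector, the matrix $\textbf{x}\textbf{x}^{H}$ has rank one and its single non-zero eigenvalue is $\textbf{x}^{H}\textbf{x} = \|\textbf{x}\|^2$. Hence the unique non-zero singular value satisfies $\sigma_{1}(\textbf{x}) = \|\textbf{x}\|$. This identifies $\sigma_{min}(\mathcal{C}_{\infty})$ with $\inf_{\textbf{x} \neq \textbf{0}} \|\textbf{x}\|$, the minimum norm of a non-zero point in the lattice $\mathbb{Z}[\imath]^{n_{t}}$.

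Next I would lower-bound this norm componentwise. Writing $\textbf{x} = [x_{1} ~\ldots ~x_{n_{t}}]^{T}$ with each $x_{k} = a_{k} + b_{k}\imath$ for $a_{k}, b_{k} \in \mathbb{Z}$, we have $\|\textbf{x}\|^2 = \sum_{k=1}^{n_{t}} |x_{k}|^2 = \sum_{k=1}^{n_{t}} (a_{k}^2 + b_{k}^2)$. Since $\textbf{x} \neq \textbf{0}$, at least one index $k$ has $(a_{k}, b_{k}) \neq (0,0)$, and for that index $a_{k}^2 + b_{k}^2 \geq 1$ because $a_{k}, b_{k}$ are integers. Therefore $\|\textbf{x}\|^2 \geq 1$, so $\sigma_{1}(\textbf{x}) = \|\textbf{x}\| \geq 1$ for every non-zero $\textbf{x}$.

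Finally, taking the infimum over all non-zero $\textbf{x} \in \mathcal{C}_{\infty}$ yields $\sigma_{min}(\mathcal{C}_{\infty}) \geq 1$, which is the claim. I do not anticipate a genuine obstacle here: the only conceptual step is recognizing that a column vector has a single non-zero singular value equal to its norm, after which the result follows from the fact that the nearest non-zero Gaussian integer to the origin is at distance $1$. One can optionally note that the bound is tight, attained by $\textbf{x} = \textbf{e}_{1}$, so in fact $\sigma_{min}(\mathcal{C}_{\infty}) = 1$; but since the statement only requires the inequality, the chain above suffices.
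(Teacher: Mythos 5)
Your proof is correct and follows essentially the same route as the paper: both arguments note that $\textbf{x}\textbf{x}^{H}$ has rank one, identify the single non-zero singular value with the Euclidean norm of $\textbf{x}$ via the trace of $\textbf{x}\textbf{x}^{H}$, and conclude from the fact that a non-zero Gaussian-integer vector has squared norm at least $1$. If anything, your version is slightly more careful: the paper writes $\sigma_{1}(\textbf{x}) = \mbox{Trace}(\textbf{x}\textbf{x}^{H}) = \|\textbf{x}\|^{2}$, conflating the singular value with its square, whereas you correctly obtain $\sigma_{1}(\textbf{x}) = \|\textbf{x}\|$; the stated inequality holds either way.
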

\begin{proof}
For any non-zero $\textbf{x} \in \mathbb{Z}[\imath]^{n_{t}}$, the matrix $\textbf{x}\textbf{x}^{H}$ has rank one, and hence, there is only one non-zero singular value of $\textbf{x}$. This means we have $\sigma_{j}(\textbf{x}) = 0 ~\mbox{ for } j \neq 1$ and $\sigma_{1}(\textbf{x}) \geq 0$ for any non-zero $\textbf{x} \in \mathbb{Z}[\imath]^{n_{t}}$. Further, since the Trace property is preserved among \emph{similar} matrices, we have $\sum_{j = 1}^{n_{t}} \sigma_{j}(\textbf{x}) = \sigma_{1}(\textbf{x}) = \mbox{Trace}(\textbf{x}\textbf{x}^{H}) = \|\textbf{x}\|^2$. Thus $\sigma_{1}(\textbf{x}) \geq 1$ for any non-zero $\textbf{x} \in \mathbb{Z}[\imath]^{n_{t}}$. This completes the proof.
\end{proof}

Using the above Lemma, the diversity order of the IF receiver is derived in the following theorem. 

\begin{theorem}
For the V-BLAST scheme, IF linear receiver provides full-receive diversity.
\end{theorem}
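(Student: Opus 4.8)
The plan is to specialize the argument used for the Full-Diversity Design Criterion to the uncoded case $T=1$, where the codewords are the vectors $\textbf{x} \in \mathbb{Z}[\imath]^{n_{t}}$ and $\mathcal{H} = \textbf{H}'$. The entire analysis hangs off the per-layer bound \eqref{P_e_bound_3}, so the proof reduces to lower bounding the exponent $\epsilon_{1}^{2}(\Lambda)$ by a channel-dependent quantity of the right diversity order and then averaging over $\textbf{H}$.

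First I would rewrite the minimum distance of $\Lambda$ in the complex domain. If $\textbf{d} \in \mathbb{Z}^{2n_{t}}$ achieves $\epsilon_{1}^{2}(\Lambda) = \|\mathcal{H}\textbf{d}^{T}\|^2$ and $\textbf{x} \in \mathbb{Z}[\imath]^{n_{t}}$ denotes its complex counterpart, then $\epsilon_{1}^{2}(\Lambda) = \|\textbf{H}\textbf{x}\|^2 = \mbox{Trace}(\textbf{H}\textbf{x}\textbf{x}^{H}\textbf{H}^{H})$. The crucial structural difference from the coded case is that $\textbf{x}\textbf{x}^{H}$ has rank one. Writing its eigendecomposition as $\textbf{U}\Sigma\textbf{U}^{H}$ with $\Sigma = \mbox{diag}(\sigma_{1}^{2}(\textbf{x}),0,\ldots,0)$, setting $\textbf{G} = \textbf{HU}$, and letting $\textbf{g}_{1}$ be its first column, the trace collapses to a single surviving term $\epsilon_{1}^{2}(\Lambda) = \|\textbf{g}_{1}\|^2 \sigma_{1}^{2}(\textbf{x})$. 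This is precisely why the V-BLAST scheme attains receive but not transmit diversity: independent streams produce rank-one codeword matrices, so all but one singular value vanish.

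Next I would invoke Lemma \ref{lemma2}, which gives $\sigma_{1}^{2}(\textbf{x}) \geq \sigma_{min}^{2}(\mathcal{C}_{\infty}) \geq 1$ for every non-zero $\textbf{x} \in \mathbb{Z}[\imath]^{n_{t}}$, whence $\epsilon_{1}^{2}(\Lambda) \geq \|\textbf{g}_{1}\|^2$. Substituting into \eqref{P_e_bound_3} and averaging over the channel exactly as in the proof of the Full-Diversity Design Criterion, I would use that $\textbf{U}$ is unitary so $\textbf{g}_{1} = \textbf{H}\textbf{u}_{1}$ has the same distribution as a column of $\textbf{H}$, making $\|\textbf{g}_{1}\|^2$ chi-square with $2n_{r}$ degrees of freedom. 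This produces a per-layer bound $P_{e}(m,\mathbb{Z}) \leq (1 + cP\sigma_{min}^{2}(\mathcal{C}_{\infty}))^{-n_{r}}$, and a union bound over the $2n_{t}$ layers, being a finite sum, leaves the exponent of $P$ unchanged and yields an overall error probability decaying like $P^{-n_{r}}$, i.e. full receive diversity.

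The step I expect to demand the most care is the averaging, specifically justifying that $\|\textbf{g}_{1}\|^2$ remains chi-square even though $\textbf{u}_{1}$ is the singular vector of the \emph{channel-dependent} minimizing lattice point $\textbf{x}$; as is implicit in the coded analysis, one argues that conditioning on any fixed unitary $\textbf{U}$ leaves the distribution of $\textbf{HU}$ unchanged, so the bound holds uniformly over the realization-dependent choice of $\textbf{U}$. Everything else is a direct transcription of the $T=1$ specialization, with the single surviving singular value correctly delivering the receive-diversity order $n_{r}$ in place of the full order $n_{t}n_{r}$ of the STBC case.
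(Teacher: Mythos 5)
Your proposal is correct and follows essentially the same route as the paper's own proof: the rank-one structure of $\textbf{x}\textbf{x}^{H}$ collapses the trace to the single term $\|\textbf{g}_{1}\|^2\sigma_{1}^{2}(\textbf{x})$, Lemma \ref{lemma2} bounds the surviving singular value below by one, and averaging the chi-square variable $\|\textbf{g}_{1}\|^2$ yields the $(1+cP)^{-n_{r}}$ per-layer bound. Your added remarks on the union bound over layers and on why the channel-dependent choice of $\textbf{U}$ does not disturb the distribution of $\textbf{G}=\textbf{HU}$ are points the paper leaves implicit, but they do not change the argument.
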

\begin{proof}
We can write $\epsilon_{1}^{2}(\Lambda)$ in \eqref{P_e_bound_3} as
\begin{equation*}
\epsilon_{1}^{2}(\Lambda) = \|{\textbf d} \mathcal{H}^{T}\|^2 = \|\mathcal{H}{\textbf d}^{T}\|^2
\end{equation*}
for some ${\textbf d} \in \mathbb{Z}^{2n_{t}}$. Note that the components of ${\textbf d}$ that result in $\epsilon_{1}^{2}(\Lambda)$ need not be in $\mathcal{S}$. Further, we can write $\epsilon_{1}^{2}(\Lambda)$ as
\begin{eqnarray*}
\epsilon_{1}^{2}(\Lambda) & = & \|\textbf{H}\textbf{x}\|_{F}^2 = \mbox{Trace} \left(\textbf{H}\textbf{U}\Sigma\textbf{U}^{H}\textbf{H}^{H}\right),
\end{eqnarray*}
where $\textbf{x} \in \mathcal{C}_{\infty}$, $\textbf{U}\Sigma\textbf{U}^{H}$ is a singular value decomposition of $\textbf{x}\textbf{x}^{H}$, and $\Sigma$ is the diagonal matrix containing the square of the singular values $\sigma_{j}(\textbf{x})$ for $1 \leq j \leq n_{t}$. By denoting $\textbf{HU} = \textbf{G}$, we have
\begin{equation*}
 \epsilon_{1}^{2}(\Lambda) = \mbox{Trace} \left(\textbf{G}\Sigma\textbf{G}^{H}\right)
 = \sum_{j = 1}^{n_{t}} \|\textbf{g}_{j}\|^2 \sigma^{2}_{j}(\textbf{x}),
\end{equation*}
where $\sigma_{j}(\textbf{x})$ denotes the $j$-th singular value of $\textbf{x}$, which is a function of the channel $\textbf{H}$. Since $\textbf{x}$ is a rank one vector, we have $\sigma_{j}(\textbf{x}) = 0 ~\mbox{ for } j \neq 1$ and from Lemma \ref{lemma2}, we have $\sigma_{1}(\textbf{x}) \geq 1$. Hence $\epsilon_{1}^{2}(\Lambda) \geq \|\textbf{g}_{1}\|^2$ where $\textbf{g}_{1}$ denotes the $1$-st column of $\textbf{G}$. Using the above lower bound, we upper bound the expression in \eqref{P_e_bound_3} as
\begin{eqnarray*}
P_{e}(m, \mathcal{H}, \mathbb{Z}) \leq \mbox{exp}\left(-c P \|\textbf{g}_{1}\|^2\right).
\end{eqnarray*}
By averaging $P_{e}(m, \mathcal{H}, \mathbb{Z})$ over different realizations of $\|\textbf{g}_{1}\|^2$, we obtain
\begin{eqnarray*}
\mathbb{E}_{\mathcal{H}}[P_{e}(m, \mathcal{H}, \mathbb{Z})] \triangleq P_{e}(m, \mathbb{Z}) \leq \left(\frac{1}{1 + cP}\right)^{n_{r}}
\end{eqnarray*}
Thus the V-BLAST scheme in \cite{zhan12} provides full-receive diversity for IF linear receivers.   
\end{proof}

\section{Directions for Future Work}
\label{sec7}


An interesting direction for future work is to verify whether any known classes of high-rate STBCs (such as Golden code) provide full diversity with IF linear receivers. In such a case, the trade-off in the error performance can be studied between the low-complexity IF receivers and the high complexity ML decoder. Another direction of future work is to construct new STBC designs with large value of $\sigma_{min}(\mathcal{C}_{\infty})$ so that they perform well with IF receivers.
\section*{Acknowledgements}

The authors would like to thank Roope Vehkalahti and Lakshmi Prasad Natarajan for their comments on the NVS property of STBCs. J. Harshan is supported by the Human-Centered Cyber-physical Systems Programme at the Advanced Digital Sciences Center from Singapore's Agency for Science, Technology and Research (A*STAR). Amin Sakzad and Emanuele Viterbo are supported by the Australian Research Council (ARC) under Discovery grants ARC DP150100285 and ARC DP160101077, respectively

\end{document}